\newtheorem{theorem}{\bf Theorem} \newtheorem{definition}[theorem]{\bf Definition} 
\newtheorem{lemma}[theorem]{\bf Lemma} \newtheorem{remark}[theorem]{\bf Remark}
  \newtheorem{proposition}[theorem]{\bf Proposition}
\title{\LARGE
Verifying dissipativity properties from noise-corrupted input-state data
}
\author{Anne Koch, Julian Berberich, and Frank Allg\"{o}wer 
\thanks{
This work was funded by Deutsche Forschungsgemeinschaft (DFG, German
Research Foundation) under Germany’s Excellence Strategy - EXC 2075 -
390740016. The authors thank the International Max Planck Research School
for Intelligent Systems (IMPRS-IS) for supporting Julian Berberich and Anne
Koch.
\newline
Anne Koch, Julian Berberich, and Frank Allg\"ower 
are with the Institute for Systems Theory and Automatic Control, University of Stuttgart, 70550 Stuttgart, Germany. 
E-mail: {\tt\small$\{$anne.koch, julian.berberich, frank.allgower$\}$@ist.uni-stuttgart.de}.
}%
}
\begin{document}

\maketitle
\thispagestyle{empty}
\pagestyle{empty}

\begin{abstract}
There exists a vast amount of literature how dissipativity properties can be exploited to design controllers for stability and performance guarantees for the closed loop. With the rising availability of data, there has therefore been an increasing interest in determining dissipativity properties from data as a means for data-driven systems analysis and control with rigorous guarantees. Most existing approaches, however, consider dissipativity properties that hold only over a finite horizon and mostly only qualitative statements can be made in the presence of noisy data. In this work, we present a novel approach to determine dissipativity of linear time-invariant systems from data where we inherently consider properties that hold over the infinite horizon. Furthermore, we provide rigorous guarantees in the case of noisy state measurements.
\end{abstract}
\section{Introduction}
In recent years, there has been an increasing interest in setting up a framework for data-driven systems analysis and control. Such a framework should ideally offer similar insights and guarantees as model-based approaches with the caveat that no explicit description of the system via differential equations is necessary but only measured trajectories of the system, which in some sense are informative enough, are needed. The allurement of such an approach is clear: Measured trajectories of a system are usually easy to obtain whereas deriving a mathematical model can be a cumbersome task. One result that recently gained momentum concerning the search of such a data-driven framework is introduced in \cite{Willems05}. In their seminal paper, the authors prove that the behavior of a linear time-invariant (LTI) system can be described by a data-dependent Hankel matrix if the input of the measured input-output trajectory entails sufficient information, i.e. is persistently exciting of sufficient order. This purely data-driven representation of an LTI system, as layed out in \cite{Berberich2019a} and proven in a state-space framework in \cite{Waarde2020}, hence opens up the development of tools for LTI systems which allow rigorous guarantees for systems analysis and control from data. Such tools include state-feedback design \cite{Persis2019}, robust controller synthesis from noisy input-state trajectories \cite{Berberich2019c}, data-driven model predictive control \cite{Coulson2019,Berberich2019b}, stabilizability and controllability analysis \cite{Waarde2020a} and dissipativity properties from input-output trajectories over the finite time horizon \cite{Maupong2017,Romer2019a,Koch2020}.

Determining dissipativity properties from data not only gives insights to the a priori unknown system, but it also opens the possibility to apply readily available feedback theorems to design controllers that achieve guaranteed closed-loop stability. For examples on stabilizing, robust or distributed control schemes on the basis of control theoretic systems properties, where no additional model knowledge is required, see for example the standard publications \cite{Zames1966, Desoer1975, Schaft2000}. More recent examples where knowledge on the $H_\infty$-norm or passivity index together with additional data lead to cooperative or robust controller design, respectively, can be found in \cite{Sharf2019, Holicki2020}. For these reasons, there have been many different approaches to obtain such control theoretic system properties from data. Some ideas on determining system properties such as the $\mathcal{L}^2$-gain or passivity parameters from data can be found in \cite{Romer2017a,Romer2019b,Sharf2020}. The limitation of these approaches is, in general, that huge amounts of data are required and the computational expenses are immense even for small examples. Most of the existing approaches for determining system properties consider LTI systems. One interesting approach in this regard are iterative schemes, see \cite{Wahlberg2010,Romer2019c} and the references therein, where iteratively applying inputs and measuring the outputs asymptotically reveals the true $\mathcal{L}_2$-gain or passivity parameters, respectively, without any a priori model knowledge. However, the disadvantage or limitation of this method is that iterative experiments might not be possible or at least require additional effort and the respective system property can only be certified over a finite time horizon (cf. definition of $L$-dissipativity in \cite{Maupong2017}). Another more recent approach is a 'one-shot approach' which calculates the respective system property from one input-output trajectory, see \cite{Maupong2017,Koch2020} and the references therein. One limitation of this approach is again that sharp results are only available for system properties holding over the finite time horizon. Furthermore, no guarantees can be provided in the case of noisy data. 

Similar to \cite{Persis2019,Berberich2019c,Waarde2020a} we use one input-state trajectory to represent an LTI system from data in the present paper. The advantage with respect to previous methods is that the system properties are certified over the infinite time horizon and that we provide guarantees on the respective system properties based on one noisy input-state trajectory. 

\section{Problem setup} 
In this paper, we consider multiple-input multiple-output 
discrete-time LTI systems of the form
\begin{align}
\begin{split}
x_{k+1}&=Ax_k+Bu_k,\\ 
y_k&=Cx_k+Du_k,
\label{eq:sys}
\end{split}
\end{align}
with $x_k \in \mathbb{R}^n$, $u_k \in \mathbb{R}^m$ and $y_k \in \mathbb{R}^p$. 
We assume that $A$ and $B$ are unknown, but one input-state trajectory of the system is available.
We collect the resulting input-state sequences
$\{u_k\}_{k=0}^{N-1}$, $\{x_k\}_{k=0}^{N}$ in the following matrices 
\begin{align*}
X &:= \begin{pmatrix} x_0 & x_1 & \cdots & x_{N-1} \end{pmatrix}, \\
X_+ &:= \begin{pmatrix} x_1 & x_2 & \cdots & x_N \end{pmatrix}, \\
U &:= \begin{pmatrix} u_0 & u_1 & \cdots & u_{N-1} \end{pmatrix}.
\end{align*}
Further, we assume that $C$, $D$ are known (or, alternatively, the corresponding output trajectory $\{y_k\}_{k=0}^{N-1}$ is additionally available cf.~Remark~\ref{rem:output}). 

Our approach is hence based on only one measured trajectory of the system with the only assumption that the data, i.e. the measured trajectory, is informative enough. Generally, this can be ensured by requiring that the input of the measured trajectory is persistently exciting in the following sense.
\begin{definition}\label{def:pe}
We say that a sequence $\left\{u_k\right\}_{k=0}^{N-1}$ with $u_k\in\mathbb{R}^m$ is persistently exciting of order $L$, if $\text{rank}\left(H_L(x)\right)=mL$.
\end{definition}
Due to their relevance in systems analysis and control, we are now interested in dissipativity properties of LTI systems~\eqref{eq:sys} on the basis of the available data. While the notion of dissipativity was introduced in \cite{Willems1972} for general (nonlinear) systems, we make use of equivalent formulations for LTI systems with quadratic supply rates as, for example, presented in \cite{Scherer2000}. Quadratic supply rates are quadratic functions $s: \mathbb{R}^m \times \mathbb{R}^p \rightarrow \mathbb{R}$ defined by 
\begin{align}
s(u,y) = \begin{pmatrix} u \\ y \end{pmatrix}^\top \Pi
\begin{pmatrix} u \\ y \end{pmatrix}.
\label{eq:supply} 
\end{align}
The matrix $\Pi \in \mathbb{R}^{(m+p) \times (m+p)}$ will be partitioned as
\begin{align*}
\Pi=\begin{pmatrix}R \phantom{^\top}&S^\top\\S&Q\end{pmatrix}
\end{align*}
throughout this paper with $Q=Q^{\top} \in \mathbb{R}^{m\times m}$, $S \in \mathbb{R}^{p\times m^{\phantom{l}}}$ and $R=R^\top \in \mathbb{R}^{p\times p}$. 
\begin{definition}
A system \eqref{eq:sys} is said to be dissipative with respect to the supply rate $s$ if there exists a function $V: \mathbb{R}^n \rightarrow \mathbb{R}$ such that
\begin{align*}
V(x_{k_1}) - V(x_{k_2}) \leq \sum_{i=k_2}^{k_1 -1} s(u_i,y_i)
\end{align*}
for all $0 \leq k_2 < k_1$ and all signals $(u,x,y)$ which satisfy \eqref{eq:sys}.
\end{definition}
The supply rate and the corresponding matrices $(Q,S,R)$ hereby define the system property of interest. For the supply rates defined by
\begin{align}
\Pi_{\gamma} = \begin{pmatrix} \gamma^2 I & 0 \\ 0 & -I \end{pmatrix}, \quad \Pi_{\text{P}} = \begin{pmatrix} 0 & I \\ I & 0 \end{pmatrix},
\label{eq:pigamma}
\end{align}
for example, we retrieve the operator gain $\gamma$ and the passivity property, respectively. The dissipativity property specified by $(Q,S,R)$ will in the following also be referred to as $(Q,S,R)$-dissipativity. 

The following standard result gives equivalent conditions on dissipativity of an LTI system, which we will in the remainder of the paper make use of to determine dissipativity from data. Explanations and the proofs can be found, e.g. in \cite{Scherer2000,Kottenstette2014} and references therein. 
\begin{theorem}
Suppose that the system~\eqref{eq:sys} is controllable and let $s$ be a quadratic supply rate of the form~\eqref{eq:supply}. Then the following statements are equivalent.
\begin{itemize}
\item[a)] The system~\eqref{eq:sys} is dissipative with respect to the supply rate $s$.
\item[b)] There exists a quadratic storage function $V(x) := x^\top P x$ with $P = P^\top$ 
such that 
\begin{align*}
V(x_{k+1}) - V(x_k) \leq s(u_k,y_k)
\end{align*}
for all $k$ and all $(u,x,y)$ satisfying \eqref{eq:sys}.
\item[c)] There exists a matrix $P = P^\top$ such that 
\begin{align}
\label{eq:diss_lmi}
\begin{pmatrix} A^\top PA - P - \hat{Q} & A^\top PB - \hat{S} \\
(A^\top PB - \hat{S} )^\top & -\hat{R} + B^\top P B \end{pmatrix} \preceq 0
\end{align}
with  $\hat{Q} = C^\top QC$, $\hat{S} = C^\top S + C^\top QD$ and $\hat{R} = D^\top QD + (D^\top S + S^\top D) + R$.
\end{itemize}
\label{thm:diss_lmi}
\end{theorem}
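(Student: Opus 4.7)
The plan is to prove $(b)\Rightarrow(a)$, $(c)\Leftrightarrow(b)$, and $(a)\Rightarrow(b)$, with the last implication being the only one that genuinely uses controllability.

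The implication $(b)\Rightarrow(a)$ is immediate: the quadratic storage function $V(x) = x^\top P x$ from (b) is itself a valid choice in the definition of dissipativity, and telescoping the one-step inequality over $[k_2, k_1-1]$ recovers the multi-step inequality required in (a).

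For $(c)\Leftrightarrow(b)$, I would proceed by a direct algebraic rewriting. Substituting $V(x) = x^\top P x$, the dynamics $x_{k+1} = A x_k + B u_k$, and the output equation $y_k = C x_k + D u_k$ into the one-step inequality yields, after collecting terms, a quadratic form in the stacked vector $z_k = (x_k^\top,\, u_k^\top)^\top$. The resulting symmetric matrix is exactly the block matrix in \eqref{eq:diss_lmi}, once one identifies $\hat{Q}$, $\hat{S}$, $\hat{R}$ as in the statement; this is a bookkeeping exercise using the quadratic expansion of $s(u, Cx+Du)$ via the partition of $\Pi$. Since $(x_k, u_k)$ can take any value in $\mathbb{R}^n \times \mathbb{R}^m$ (for instance as an initial condition and initial input of some trajectory), the one-step inequality holding for all trajectories is equivalent to negative semidefiniteness of the matrix, which is (c). Note that this part does not need controllability.

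The main obstacle is $(a)\Rightarrow(b)$, where a quadratic storage function must be extracted from an arbitrary one. The standard route is to consider the available storage
\begin{align*}
V_a(x) := \sup\Bigl\{ -\sum_{i=0}^{K-1} s(u_i, y_i) \;:\; K \ge 0,\; x_0 = x \Bigr\}.
\end{align*}
Dissipativity guarantees that $V_a$ is nonnegative and that every storage function dominates it, while controllability is exactly what prevents $V_a(x) \equiv +\infty$: any state $x$ can be steered to $0$ in finite time, so the supremum is bounded above in terms of the cost of a reaching trajectory. Because the dynamics are linear and the supply rate is quadratic, one can then argue that $V_a$ is itself a quadratic form, i.e., $V_a(x) = x^\top P x$ for some symmetric $P$, and this $P$ satisfies the one-step inequality by the usual dynamic-programming / Bellman-type argument. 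Establishing the quadratic structure of $V_a$ under controllability is the crux of the proof; a cleaner alternative I would fall back on in the LTI setting is to invoke the discrete-time Kalman--Yakubovich--Popov lemma in the form presented in \cite{Scherer2000}, which takes one from the integral dissipation inequality (a) directly to the LMI (c) under controllability.
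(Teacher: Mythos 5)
The paper itself does not prove this theorem; it is stated as a standard result with the proof deferred to \cite{Scherer2000,Kottenstette2014}, so there is no in-paper argument to compare against and I can only assess your sketch on its own terms. The two directions you actually carry out are correct: $(b)\Rightarrow(a)$ by telescoping, and $(b)\Leftrightarrow(c)$ by expanding $s(u_k,Cx_k+Du_k)$ with the paper's (slightly unusual) partition of $\Pi$, where $R$ sits in the $u$-block and $Q$ in the $y$-block, and noting that $(x_k,u_k)$ ranges over all of $\mathbb{R}^n\times\mathbb{R}^m$ because the initial state is free. The one shaky spot is your justification of finiteness of the available storage in $(a)\Rightarrow(b)$: the paper's definition of dissipativity allows an arbitrary, possibly sign-indefinite $V:\mathbb{R}^n\to\mathbb{R}$, so the usual bound
\begin{align*}
-\sum_{i=0}^{K-1}s(u_i,y_i)\;\le\;V(x_0)-V(x_K)
\end{align*}
does not by itself bound the supremum (since $-V(x_K)$ need not be bounded above), and controllability in the sense of steering $x$ to $0$ is really the argument for finiteness of the \emph{required supply}, not of $V_a$. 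This is precisely the cyclo-dissipativity subtlety that makes extracting a quadratic storage function nontrivial; your stated fallback to the discrete-time KYP machinery of \cite{Scherer2000} under controllability is the correct resolution and coincides with the reference the paper itself gives, so the proposal stands as a sound outline even though the available-storage route as literally written would need repair.
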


In the following we use these equivalences to verify or find dissipativity properties from data. More precisely, we start in the next section by introducing an equivalent data-based dissipativity formulation on the basis of noise-free input and state trajectories.

\section{Data-driven dissipativity from input-state trajectories}
\label{sec:diss}
With the definitions from the last section, we can directly state necessary and sufficient conditions for dissipativity properties from noise-free input and state trajectories. In this case, verifying dissipativity boils down to checking one simple LMI.
\begin{theorem}
\label{thm:1}
Given noise-free data $\{u_k\}_{k=0}^{N-1}$, $\{x_k\}_{k=0}^{N}$ of a controllable LTI system $G$ and the feasibility problem to find $P=P^\top$ such that
\begin{align}
	M \preceq 0
	\label{eq:opt_allg}
\end{align}
with
\begin{align}
\begin{split}
M &= X_+^\top P X_+ - X^\top P X \\
&- \begin{pmatrix} U \\ CX+DU \end{pmatrix}^\top \begin{pmatrix} R & S^\top \\ S & Q \end{pmatrix} \begin{pmatrix} U \\ CX+DU \end{pmatrix}.
\end{split}
\label{eq:M}
\end{align}
\begin{enumerate}
\item If there exists no $P=P^\top$ such that \eqref{eq:opt_allg} holds, then $G$ is not $(Q,S,R)$-dissipative.
\item If there exists $P=P^\top$ such that \eqref{eq:opt_allg} holds and, additionally, $\mathrm{rank} \begin{pmatrix} X \\ U \end{pmatrix} = n+m$,  
then $G$ is $(Q,S,R)$-dissipative.
\end{enumerate} 
\end{theorem}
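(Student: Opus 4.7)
The plan is to establish a single algebraic identity that expresses the data matrix $M$ as a congruence transformation of the model-based LMI matrix from Theorem~\ref{thm:diss_lmi}; once that identity is in place, both parts follow by applying the congruence in opposite directions.

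First I would exploit that the recorded trajectory satisfies the dynamics columnwise, so that $X_+ = AX + BU$. Substituting this into $X_+^\top P X_+$ and separately expanding the quadratic form $\begin{pmatrix} U \\ CX+DU \end{pmatrix}^\top \Pi \begin{pmatrix} U \\ CX+DU \end{pmatrix}$, then collecting terms using the definitions of $\hat{Q}, \hat{S}, \hat{R}$ from Theorem~\ref{thm:diss_lmi}, one expects the identity
\begin{align*}
M = \begin{pmatrix} X \\ U \end{pmatrix}^\top \Phi(P) \begin{pmatrix} X \\ U \end{pmatrix},
\end{align*}
where $\Phi(P)$ denotes the block matrix appearing on the left-hand side of~\eqref{eq:diss_lmi}. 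This is a routine algebraic verification.

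For claim~(1), I would argue by contraposition. If $G$ is $(Q,S,R)$-dissipative, Theorem~\ref{thm:diss_lmi} yields a symmetric $P$ with $\Phi(P) \preceq 0$, and congruence preserves negative semidefiniteness, so $M \preceq 0$; equivalently, infeasibility of~\eqref{eq:opt_allg} rules out dissipativity. For claim~(2), the rank condition $\mathrm{rank}\begin{pmatrix} X \\ U \end{pmatrix} = n+m$ provides a right inverse $W \in \mathbb{R}^{N \times (n+m)}$ with $\begin{pmatrix} X \\ U \end{pmatrix} W = I_{n+m}$. Then $W^\top M W = \Phi(P)$, and $M \preceq 0$ forces $\Phi(P) \preceq 0$; Theorem~\ref{thm:diss_lmi} then certifies dissipativity.

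The only nontrivial step is the bookkeeping needed to verify that the cross terms produced when expanding the $\Pi$-quadratic form collapse exactly into the $\hat{Q}, \hat{S}, \hat{R}$ blocks of~\eqref{eq:diss_lmi}. Once that identity is established, the proof amounts to a single congruence argument used in both directions, with the rank condition ensuring that the congruence is reversible for the sufficient direction.
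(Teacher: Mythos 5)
Your proposal is correct and follows essentially the same route as the paper: both rest on the identity $M = \begin{pmatrix} X \\ U \end{pmatrix}^\top \Phi(P) \begin{pmatrix} X \\ U \end{pmatrix}$ obtained by substituting $X_+ = AX+BU$, with claim (1) following because congruence preserves negative semidefiniteness and claim (2) from the full-row-rank condition. Your write-up merely makes explicit two steps the paper leaves terse, namely the contrapositive phrasing of part (1) and the right-inverse argument reversing the congruence in part (2).
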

\begin{proof} Substituting $X_+ = A X + B U$, the semidefinitness condition in \eqref{eq:opt_allg} can be equivalently written as 
\begin{align}
\begin{pmatrix} X \\ U \end{pmatrix}^\top 
\begin{pmatrix} A^\top PA - P - \hat{Q} & A^\top PB - \hat{S} \\
(A^\top PB - \hat{S} )^\top & -\hat{R} + B^\top P B \end{pmatrix} 
 \begin{pmatrix} X \\ U \end{pmatrix}  
\label{eq:pf_lmi}
\end{align}
with  $\hat{Q} = C^\top QC$, $\hat{S} = C^\top S + C^\top QD$ and $\hat{R} = D^\top QD + (D^\top S + S^\top D) + R$.
\begin{enumerate}
\item If problem~\eqref{eq:opt_allg} is infeasible, this directly implies that \eqref{eq:diss_lmi} is not negative semidefinite for any $P$, i.e. $G$ is not dissipative by Theorem~\ref{thm:diss_lmi}.
\item With full row rank of $\begin{pmatrix} X \\ U \end{pmatrix}$, the semidefiniteness condition \eqref{eq:pf_lmi} in turn implies that \eqref{eq:diss_lmi} holds, which implies dissipativity by Theorem~\ref{thm:diss_lmi}.
\end{enumerate}
\end{proof}

\begin{remark}
\label{rem:rank}
The condition $\mathrm{rank} \begin{pmatrix} X \\ U \end{pmatrix} = n+m$ can be easily checked for a given input and state trajectory. This rank condition can also be enforced by choosing the input $\{u_k\}_{k=0}^{N-1}$ persistently exciting of order $n+1$, cf.~\cite[Corollary 2]{Willems05}.
\end{remark}
\begin{remark}
Since the introduced feasibility problem~\eqref{eq:opt_allg} is linear in $(Q,S,R)$, optimization for finding an 'optimal' or 'tight' system property yields a simple SDP. The problem of minimizing $\gamma^2$ such that~\eqref{eq:opt_allg} for $R=\gamma^2 I$, $S=0$ and $Q=-I$, e.g., yields the $\mathcal{L}_2$-gain. Similar formulations can be found for input and output strict passivity, conic relations or general positive-negative supply rates with $Q + I \preceq 0$.
\end{remark}
\begin{remark}
\label{rem:output}
If $C$ and $D$ are unknown but measurements of the output are available instead, then one can equivalently substitute $Y = CX + DU$ in the feasibility problem \eqref{eq:opt_allg} with $Y := \begin{pmatrix} y_0 & y_1 & \cdots & y_{N-1} \end{pmatrix}$. 

Note that since the feasibility problem \eqref{eq:opt_allg} is linear in $C,D$ it might be interesting for some applications to optimize over $C$ and $D$. Such a scenario could be sensor placement with the goal to maximize the output feedback passivity parameter of agents performing cooperative control tasks. 
\end{remark}
It is particularly interesting that the viewpoint taken in this paper allows to determine dissipativity properties over the infinite horizon from only considerably short data. Furthermore, this viewpoint and the corresponding introduced approach also allow to include robust inference of dissipativity from noisy state trajectories as will be discussed in the next section.
\section{Dissipativity properties from noisy input-state trajectories}
\label{sec:noise}
When working with data and measured trajectories, these trajectories are often affected by noise. We therefore derive guarantees for dissipativity properties from noise corrupted state measurements in this section by using similar ideas to \cite{Berberich2019c}.
Therefore, we consider in this section LTI systems that are disturbed by process noise of the form
\begin{align}
x_{k+1} &= A x_k + B u_k + B_w w_k \\
y_k &= C x_k + D u_k 
\label{eq:sys_noise}
\end{align}
where $w_k \in \mathbb{R}^{m_w}$ represents the noise. We denote by $\{\hat{w}_k\}_{k=0}^{N-1}$ the actual noise sequence which led to the measured input-state trajectories $\{u_k\}_{k=0}^{N-1}$, $\{x_k\}_{k=0}^{N}$. While $\{\hat{w}_k\}_{k=0}^{N-1}$ is generally unknown, we assume that information in form of the following bound on the matrix  
\begin{align*}
\hat{W} = \begin{pmatrix} \hat{w}_0 & \hat{w}_1 & \cdots & \hat{w}_{N-1} \end{pmatrix}
\end{align*}
is available. To be specific, we assume that $\hat{W}$ is an element of the set
\begin{align}
\mathcal{W} = \{ W \in \mathbb{R}^{m_w \times N} | \begin{pmatrix} W \\ I \end{pmatrix}^\top \begin{pmatrix} Q_w & S_w \\ S_w^\top & R_w \end{pmatrix} \begin{pmatrix} W \\ I \end{pmatrix} \succeq 0 \}
\label{eq:W}
\end{align}
with $Q_w \in \mathbb{R}^{m_w \times m_w}$, $S_w \in \mathbb{R}^{m_w \times N}$ and $R_w \in \mathbb{R}^{N \times N}$ with $R_w \succ 0$. Hence, any unknown noise realization that affects the measured data is bounded by a quadratic matrix inequality. This definition of the set $\mathcal{W}$ is a flexible noise or disturbance description in literature (cf.~\cite{Scherer2000,Berberich2019c}).

Since the actual realization of the noise $\{\hat{w}_k\}_{k=0}^{N-1}$ corresponding to the measured input and state trajectories $\{u_k\}_{k=0}^{N-1}$, $\{x_k\}_{k=0}^{N}$ is unknown, there generally exist multiple pairs $(A_{\text{d}},B_{\text{d}})$ which are consistent with the data for some noise instance $W \in \mathcal{W}$. We denote the set of all such $(A_d, B_d)$ by
\begin{align*}
\Sigma_{X,U} = \{ (A_{\text{d}},B_{\text{d}}) | X_+ = A_{\text{d}}X+B_{\text{d}}U+B_w W, W \in \mathcal{W} \}.
\end{align*}
By assumption, the 'true' system matrices $(A,B)$ are in the set $\Sigma_{X,U}$, i.e. $X_+ = AX + BU + B_W \hat{W}$ with $\hat{W} \in \mathcal{W}$.
The key for guaranteeing that a system~\eqref{eq:sys} has a dissipativity property is that we need to verify this dissipativity property for \textit{all} systems which are consistent with the data for some $W \in \mathcal{W}$, i.e. for all systems in the set $\Sigma_{X,U}$. Therefore, we first develop a data-driven open-loop representation in the following lemma.
\begin{lemma}
If there exists a matrix $\mathcal{G}$ such that 
\begin{align}
	\begin{pmatrix} X \\ U \end{pmatrix} \mathcal{G} = I
	\label{eq:G_inv}
\end{align}
then all $(A_d, B_d)$ in the set $\Sigma_{X,U}$ can equivalently be described by
\begin{align}
\begin{pmatrix} A_d & B_d \end{pmatrix} = (X_+ - B_w W) \mathcal{G}
	\label{eq:para}
\end{align}
for any $W \in \mathcal{W}$ satisfying \begin{align}
(X_+ - B_w W)
\begin{pmatrix}
X \\ U 
\end{pmatrix}^\perp
= 0.
\label{eq:necessary}
\end{align}
\label{lem:para}
\end{lemma}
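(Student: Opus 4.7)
The plan is to prove both directions of the stated equivalence. For the forward inclusion, take any $(A_d, B_d) \in \Sigma_{X,U}$; by definition this means $X_+ = A_d X + B_d U + B_w W$ for some $W \in \mathcal{W}$, or equivalently $X_+ - B_w W = \begin{pmatrix} A_d & B_d \end{pmatrix} \begin{pmatrix} X \\ U \end{pmatrix}$. Right-multiplying by $\mathcal{G}$ and using the hypothesis \eqref{eq:G_inv} immediately yields the parametrization \eqref{eq:para}. The necessary condition \eqref{eq:necessary} follows by right-multiplying the same identity by $\begin{pmatrix} X \\ U \end{pmatrix}^\perp$ and invoking $\begin{pmatrix} X \\ U \end{pmatrix} \begin{pmatrix} X \\ U \end{pmatrix}^\perp = 0$.

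For the converse direction, given $W \in \mathcal{W}$ satisfying \eqref{eq:necessary}, I would define $(A_d, B_d)$ by \eqref{eq:para} and verify that $X_+ - B_w W = \begin{pmatrix} A_d & B_d \end{pmatrix}\begin{pmatrix} X \\ U \end{pmatrix}$, which places $(A_d, B_d)$ in $\Sigma_{X,U}$. The key observation is that \eqref{eq:necessary} says the rows of $X_+ - B_w W$ annihilate $\begin{pmatrix} X \\ U \end{pmatrix}^\perp$, i.e.\ they lie in the row space of $\begin{pmatrix} X \\ U \end{pmatrix}$. Consequently there exists some $M$ with $X_+ - B_w W = M \begin{pmatrix} X \\ U \end{pmatrix}$, and then using \eqref{eq:G_inv} one obtains $(X_+ - B_w W)\mathcal{G}\begin{pmatrix} X \\ U \end{pmatrix} = M \begin{pmatrix} X \\ U \end{pmatrix} \mathcal{G} \begin{pmatrix} X \\ U \end{pmatrix} = M \begin{pmatrix} X \\ U \end{pmatrix} = X_+ - B_w W$, which is exactly the required identity.

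The main subtlety is the asymmetry of the hypothesis \eqref{eq:G_inv}: it gives $\begin{pmatrix} X \\ U \end{pmatrix}\mathcal{G} = I$, but the converse direction really wants $\mathcal{G}\begin{pmatrix} X \\ U \end{pmatrix}$ to act like the identity, which in general it does not (it is only an oblique projector onto $\mathrm{row}\,\begin{pmatrix} X \\ U \end{pmatrix}$). The role of \eqref{eq:necessary} is precisely to restrict attention to those noise realizations $W$ for which $X_+ - B_w W$ lies in that row span, so that this projector does fix it. Beyond tracking this asymmetry carefully, I do not anticipate any further technical obstacles; the argument is essentially a clean application of the right-inverse identity combined with a row-span/orthogonal-complement duality.
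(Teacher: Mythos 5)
Your proof is correct and follows essentially the same route as the paper: the forward direction is right-multiplication by $\mathcal{G}$ using the right-inverse property, and the converse rests on the fact that \eqref{eq:necessary} is equivalent (Fredholm alternative) to $X_+ - B_w W$ lying in the row space of $\begin{pmatrix} X \\ U\end{pmatrix}$, which is exactly your existence of $M$ with $X_+ - B_w W = M\begin{pmatrix} X \\ U\end{pmatrix}$. Your explicit remark that $\mathcal{G}\begin{pmatrix} X \\ U\end{pmatrix}$ is only an oblique projector fixed by elements of that row space is a nice articulation of why \eqref{eq:necessary} is needed.
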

 
\begin{proof}
First note that as explained in \cite[Theorem~4]{Berberich2019c}, the constraint \eqref{eq:necessary} is, by the Fredholm alternative, equivalent to the existence of a solution $V$ to the system of linear equations
\begin{align}
V \begin{pmatrix} X \\ U \end{pmatrix} = X_+ - B_w W.
\label{eq:fredholm}
\end{align}

i) Let us assume \eqref{eq:para} holds for some $W \in \mathcal{W}$ with \eqref{eq:necessary}. We need to show that there exists an $(\tilde{A}, \tilde{B})$, $\tilde{W} \in \mathcal{W}$ such that 
\begin{align}
\begin{pmatrix} \tilde{A} & \tilde{B} \end{pmatrix} = (X_+ - B_w W) \mathcal{G} \label{eq:pf_temp}\\
\text{with} \quad X_+ = \tilde{A}X + \tilde{B}U+ B_w \tilde{W}. \notag
\end{align}

We know that for all $W \in \mathcal{W}$ satisfying \eqref{eq:necessary}, there exists a solution $V$ to \eqref{eq:fredholm}. Hence the choice $\begin{pmatrix} \tilde{A} & \tilde{B} \end{pmatrix} = V$ from \eqref{eq:fredholm}  ensures
\begin{align*}
X_+ = \tilde{A} X + \tilde{B} U + B_w \tilde{W}
\end{align*}
with $\tilde{W} = W$, and $\begin{pmatrix} \tilde{A} & \tilde{B} \end{pmatrix} = V$ also satisfies
\begin{align*}
\begin{pmatrix} \tilde{A} & \tilde{B} \end{pmatrix} = \begin{pmatrix} \tilde{A} & \tilde{B} \end{pmatrix} \begin{pmatrix} X \\ U \end{pmatrix} \mathcal{G} = (X_+ - B_w W) \mathcal{G}.
\end{align*}

ii) For any $(A_d, B_d) \in \Sigma_{X,U}$ per definition there exists a $W \in \mathcal{W}$ such that
\begin{align*}
A_d X + B_d U = X_+ - B_w W.
\end{align*}
This implies the existence of solution V to \eqref{eq:fredholm} hence \eqref{eq:necessary} holds.
Multiplying $\mathcal{G}$ from the right on both sides immediately yields 
\begin{align*}
\begin{pmatrix} A_d & B_d \end{pmatrix} \begin{pmatrix} X \\ U \end{pmatrix} \mathcal{G} = \begin{pmatrix} A_d & B_d \end{pmatrix}
 = (X_+ - B_w W)\mathcal{G}.
\end{align*}
\end{proof}

While in Lemma~\ref{lem:para}, an equivalent description of $\Sigma_{X,U}$ from input and state data has been introduced, we will in the following mainly consider the following superset of $\Sigma_{X,U}$. Let $\Sigma_{X,U}^S$ denote the set of systems which are described by 
\begin{align}
\begin{pmatrix} A_d^s & B_d^s \end{pmatrix} = ( X_+ - B_w W )\mathcal{G} \quad \text{for any} \; W \in \mathcal{W}.
\label{eq:superset}
\end{align}
We hence drop the condition \eqref{eq:necessary}, which immediately shows $\Sigma_{X,U} \subseteq \Sigma_{X,U}^S$.

By introducing the equivalent formulation for $(A_d, B_d)$ on the basis of data in Lemma~\ref{lem:para} and defining the resulting superset $\Sigma_{X,U}^S$ in \eqref{eq:superset}, we have rewritten the problem in a form such that we can directly apply robust systems analysis tools to find sufficient conditions on dissipativity properties. More precisely, the set $\Sigma_{X,U}^S$ can be represented in a linear fractional transformation (LFT) of a nominal system with the disturbance $W$ by 
\begin{align}
\begin{pmatrix}
x_{k+1} \\ y_k \\ z_k 
\end{pmatrix} = \begin{pmatrix} X_+ \mathcal{G} & -B_w \\ \begin{pmatrix}C & D\end{pmatrix} & 0 \\ \mathcal{G} & 0 \end{pmatrix} 
\begin{pmatrix}
x_{k} \\ u_k \\ \tilde{w}_k 
\end{pmatrix}\quad \text{with} \;\; \tilde{w}_k = W z_k
\label{eq:lft}
\end{align}
with $W \in \mathcal{W}$.
This brings us to the main result in this section, which allows to guarantee dissipativity properties from noisy input-state trajectories.

\begin{theorem}
Let $Q \preceq 0$. If there exists a matrix $\mathcal{G}$ with \eqref{eq:G_inv} and $P=P^\top \succ 0$, $\tau > 0$ s.t. \eqref{eq:rob_perf_lmi} holds, then all systems consistent with the data $(A_d, B_d) \in \Sigma_{X,U}$ are $(Q,S,R)$-dissipative. 
\label{thm:noise}
\end{theorem}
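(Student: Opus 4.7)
The strategy is to establish the theorem by proving the dissipativity property uniformly over the superset $\Sigma_{X,U}^S$ introduced in \eqref{eq:superset}, since $\Sigma_{X,U}\subseteq\Sigma_{X,U}^S$ and hence any property that holds for every element of the superset automatically holds for every element of $\Sigma_{X,U}$. In particular, by Theorem~\ref{thm:diss_lmi}, it suffices to show that for every $W\in\mathcal{W}$ the pair $(A_d^s,B_d^s)=(X_+-B_wW)\mathcal{G}$ satisfies the dissipativity LMI \eqref{eq:diss_lmi} with the \emph{same} matrix $P$. The condition $\bigl(\begin{smallmatrix}X\\U\end{smallmatrix}\bigr)\mathcal{G}=I$ in \eqref{eq:G_inv} plays a key role: it turns $\mathcal{G}$ into a right-inverse that effectively parametrizes the admissible system matrices in a form amenable to robust LMI techniques.

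The technical heart of the plan is to apply the full-block S-procedure to absorb the quadratic matrix inequality in the definition of $\mathcal{W}$ in \eqref{eq:W}. Concretely, I would rewrite the dissipativity LMI for $(A_d^s,B_d^s)$ as a quadratic form in $(x_k,u_k,\tilde w_k)$ using the LFT representation \eqref{eq:lft}, so that the state-update $x_{k+1}=X_+\mathcal{G}\bigl(\begin{smallmatrix}x_k\\u_k\end{smallmatrix}\bigr)-B_w\tilde w_k$ replaces $(A_d^s,B_d^s)$ everywhere. The demand ``dissipativity LMI $\preceq 0$ for all $\tilde w_k=Wz_k$ with $W\in\mathcal{W}$'' then takes the form of a matrix inequality that must hold whenever a second, $\mathcal{W}$-induced quadratic expression is nonnegative. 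The S-procedure allows this implication to be enforced by requiring a single LMI parametrized by a scalar multiplier $\tau>0$; this LMI is precisely \eqref{eq:rob_perf_lmi}, whose feasibility is assumed.

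The role of the sign condition $Q\preceq 0$ is to guarantee convexity of the robust dissipativity condition when combined with the noise-parametrization through $\mathcal{G}$. Specifically, the output-weight $\hat Q=C^\top QC$ appears with a sign that, together with $P\succ 0$, allows the cross terms $A_d^\top P A_d$ and $B_d^\top P B_d$ to be handled by a Schur-complement argument; $Q\preceq 0$ ensures that the resulting matrix inequality is linear in $(P,\tau,\mathcal{G})$ rather than indefinite. Together with $\tau>0$ and $P\succ 0$, this is what makes \eqref{eq:rob_perf_lmi} an LMI that can be solved directly by SDP.

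The main obstacle I anticipate is the careful bookkeeping in reducing the robust dissipativity condition to the LFT-based LMI: both $(A_d^s,B_d^s)$ and the $\mathcal{W}$-constraint depend on $W$, and the quadratic form one obtains involves bilinear products of $\mathcal{G}$ with $X_+$, $B_w$, and the $(Q,S,R)$ weights through $(\hat Q,\hat S,\hat R)$. Writing this out as a single block matrix, identifying the multiplier structure, and invoking the S-procedure correctly (with the inequality sign matching the definition of $\mathcal{W}$ in \eqref{eq:W}) is the step that requires care; everything else is standard dissipativity and LFT manipulation once this block form is obtained.
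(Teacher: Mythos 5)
Your proposal is correct and follows essentially the same route as the paper: pass to the superset $\Sigma_{X,U}^S$, represent it as the LFT \eqref{eq:lft}, and apply the full-block S-procedure with multiplier $\tau$ to obtain \eqref{eq:rob_perf_lmi} as a sufficient condition for a common storage function $x^\top P x$; the paper simply compresses these steps into a citation of standard robust performance analysis. One minor imprecision: the role of $Q\preceq 0$ is not a Schur-complement/linearity issue but the sign (inertia) requirement on the performance-channel block $-Q\succeq 0$ of the multiplier, which the full-block S-procedure needs.
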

\begin{proof}
This result follows from an application of known robust control methods to the system in \eqref{eq:lft}, cf.~\cite{Scherer2000,Scherer2000a}. As $\Sigma_{X,U} \subseteq \Sigma_{X,U}^S$, this proves the claim.
\end{proof}

\begin{figure*}[b]
\noindent\makebox[\linewidth]{\rule{\textwidth}{0.4pt}}
\vspace{2pt}
\begin{align}\label{eq:rob_perf_lmi}
\mleft(
\begin{array}{ccc}I&0&0\\\multicolumn{2}{c}{(X_+ \mathcal{G})\phantom{-}}&-B_w\\\hline
0&I&0\\C&D&0\\\hline0&0&I\\\multicolumn{2}{c}{\mathcal{G} \phantom{,} }&0\end{array}
\mright)^\top
\mleft(
\begin{array}{cc|cc|cc}
-P&0&0&0&0&0\\
0&P&0&0&0&0\\\hline
0&0&-R&-S^\top&0&0\\
0&0&-S&-Q&0&0\\\hline
0&0&0&0&\tau Q_w&\tau S_w\\
0&0&0&0&\tau S_w^\top&\tau R_w
\end{array}\mright)
\mleft(
\begin{array}{ccc}I&0&0\\ \multicolumn{2}{c}{(X_+ \mathcal{G}) \phantom{,}}&-B_w\\\hline
0&I&0\\C&D&0\\\hline0&0&I\\\multicolumn{2}{c}{\mathcal{G} \phantom{,} }&0\end{array}
\mright)
\prec0 
\end{align}
\end{figure*}

\begin{remark}
Requiring $Q \preceq 0$ is necessary to apply the full block S-procedure \cite{Scherer2000} that leads to the result used in the proof of Theorem~\ref{thm:noise}. Note, however, that this includes most relevant dissipativity properties such as the $\mathcal{L}_2$-gain (cf.\ $\Pi_\gamma$ in~\eqref{eq:pigamma}) and passivity (cf.\ $\Pi_\mathrm{P}$ in~\eqref{eq:pigamma}).
\end{remark}  

\begin{remark}
The sufficiency condition in Theorem~\ref{thm:noise} is that there exists a matrix $\mathcal{G}$ such that \eqref{eq:G_inv} holds, i.e. requires that there exists a right-inverse of the matrix $\begin{pmatrix} X \\ U \end{pmatrix}$. This is equivalent to requiring this matrix to have full row rank, i.e. $\mathrm{rank}\begin{pmatrix} X \\ U \end{pmatrix} = n+m$, cf.\ Remark~\ref{rem:rank}.
\end{remark}

\begin{remark}
Note that assuming $B_w$ to be known is not restrictive. Including $B_w$ in the analysis simply offers one approach how additional knowledge on the influence of the process noise can be included into the optimization problem. For example, $B_w$ can easily incorporate knowledge on which states are affected by noise or if some states are affected by the same noise. If no knowledge is available on how the noise acts on the system, one could simply use the identity matrix $B_w = I$.
\end{remark}

Ideally, we would like to include the condition \eqref{eq:necessary} into the optimization problem, which is generally still an open problem and part of ongoing investigations. In the special case that $\begin{pmatrix} X \\ U \end{pmatrix}$ has full rank and quadratic (i.e. $N=n+m$), the condition \eqref{eq:necessary} is trivially satisfied and $\Sigma_{X,U} = \Sigma_{X,U}^S$. In this case, the feasbility problem can be written in a more compact way without equality condition. This result is summarized in the following proposition.
\begin{proposition}
\label{prop:noise}
Let $N=n+m$ and $\begin{pmatrix} X \\ U \end{pmatrix}$ have full rank. 
Then all $(A_d, B_d) \in \Sigma_{X,U}$ are $(Q,S,R)$-dissipative if there exists $P=P^\top$, $\tau > 0$ such that 
\begin{align}
\begin{pmatrix}
B_w^\top P B_w + \tau Q_w& -B_w^\top P X_+ + \tau S_w \\ - X_+^\top P B_w + \tau S_w^\top & M + \tau R_w\end{pmatrix} \preceq 0
\label{eq:opt_noise}
\end{align}
with $M$ as defined in \eqref{eq:M}.
\end{proposition}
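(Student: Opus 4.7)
The plan is to exploit the fact that when $N=n+m$ and $\begin{pmatrix} X \\ U \end{pmatrix}$ has full rank, the data matrix is square and invertible, which trivialises \eqref{eq:necessary} and turns pre- and post-multiplication by $\begin{pmatrix} X \\ U \end{pmatrix}$ into a genuine congruence. The robust dissipativity statement will then collapse into a single S-procedure application. Concretely, under these hypotheses the right-inverse $\mathcal{G}$ of Lemma~\ref{lem:para} is simply the inverse of $\begin{pmatrix} X \\ U \end{pmatrix}$, and $\begin{pmatrix} X \\ U \end{pmatrix}^\perp = 0$, so \eqref{eq:necessary} holds automatically; hence $\Sigma_{X,U} = \Sigma_{X,U}^S$, and every $(A_d,B_d) \in \Sigma_{X,U}$ is in bijection with a unique $W \in \mathcal{W}$ through $A_d X + B_d U = X_+ - B_w W$.

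Next I would fix such a pair and apply Theorem~\ref{thm:diss_lmi}(c), then repeat the manipulation from the proof of Theorem~\ref{thm:1}: pre- and post-multiply the dissipativity LMI \eqref{eq:diss_lmi} by $\begin{pmatrix} X \\ U \end{pmatrix}^\top$ and $\begin{pmatrix} X \\ U \end{pmatrix}$. Because the data matrix is invertible this is a congruence and the transformed LMI is equivalent to the original. Substituting $A_d X + B_d U = X_+ - B_w W$ into the $A_d^\top P A_d$, $A_d^\top P B_d$ and $B_d^\top P B_d$ blocks, expanding the resulting square, and recognising the definition of $M$ from \eqref{eq:M}, the inequality becomes
\begin{align*}
\begin{pmatrix} W \\ I \end{pmatrix}^\top
\begin{pmatrix} B_w^\top P B_w & -B_w^\top P X_+ \\ -X_+^\top P B_w & M \end{pmatrix}
\begin{pmatrix} W \\ I \end{pmatrix} \preceq 0.
\end{align*}
Thus $(A_d,B_d)$ is dissipative with quadratic storage $x^\top P x$ precisely when this quadratic-in-$W$ inequality holds for the $W$ associated with $(A_d,B_d)$.

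To certify dissipativity for every $(A_d,B_d) \in \Sigma_{X,U}$ one must therefore enforce the displayed inequality for every $W \in \mathcal{W}$, i.e.\ prove the implication whose hypothesis is the defining quadratic matrix inequality of $\mathcal{W}$ in \eqref{eq:W}. Applying the standard S-procedure with multiplier $\tau > 0$ yields the sufficient LMI
\begin{align*}
\begin{pmatrix} B_w^\top P B_w & -B_w^\top P X_+ \\ -X_+^\top P B_w & M \end{pmatrix}
+ \tau \begin{pmatrix} Q_w & S_w \\ S_w^\top & R_w \end{pmatrix} \preceq 0,
\end{align*}
which after merging the blocks is exactly \eqref{eq:opt_noise}.

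The only point requiring care is the congruence step: both directions of the equivalence are needed, and both rely on invertibility of $\begin{pmatrix} X \\ U \end{pmatrix}$. In the general rectangular case this fails, which is precisely why the extra constraint \eqref{eq:necessary} reappears and why the heavier LFT/full-block S-procedure machinery of Theorem~\ref{thm:noise} is unavoidable there. Note also that no $Q \preceq 0$ assumption enters this argument, since we invoke only the ordinary (single-multiplier) S-procedure rather than a full-block version.
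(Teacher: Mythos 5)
Your proposal is correct and follows essentially the same route as the paper's proof: the single-multiplier S-procedure against the quadratic description of $\mathcal{W}$, the identification of $\Sigma_{X,U}$ with $\{(X_+-B_wW)\mathcal{G} : W\in\mathcal{W}\}$ via Lemma~\ref{lem:para} under the square full-rank assumption, and the congruence with $\begin{pmatrix} X \\ U \end{pmatrix}$ reducing to the dissipativity LMI of Theorem~\ref{thm:diss_lmi}(c) as in Theorem~\ref{thm:1}. You merely traverse the chain in the opposite direction and, correctly, invoke only the sufficient direction of the S-procedure, so you do not need the constraint-qualification assumption the paper mentions (which it only uses to additionally claim equivalence).
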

\begin{proof} Under constraint qualification (i.e. there exists a $W \in \mathrm{int}(\mathcal{W})$), the S-procedure for two quadratic terms is necessary and sufficient \cite{Boyd2004}. Hence, the feasibility problem in \eqref{eq:opt_noise} can be equivalently formulated as
\begin{align}
\begin{pmatrix}
W \\ I\end{pmatrix}^\top
\begin{pmatrix}
B_w^\top P B_w & -B_w^\top P X_+ \\ - X_+^\top P B_w & M
\end{pmatrix}
\begin{pmatrix}
W \\ I \end{pmatrix} \preceq 0 \quad \forall W \in \mathcal{W}.
\label{eq:opt_noise2}
\end{align}

With $\begin{pmatrix} X \\ U \end{pmatrix}$ quadratic and full rank, \eqref{eq:necessary} is fulfilled for all $W \in \mathcal{W}$. From Lemma~\ref{lem:para} we hence know that any $(A_d, B_d)$ that is consistent with the data can be 
written as
\begin{align}
\begin{pmatrix} A_d & B_d \end{pmatrix} = (X_+ - B_w W )\mathcal{G} \quad \text{for any} \; W \in \mathcal{W}
\end{align}
and thus, for all $(A_d, B_d) \in \Sigma_{X,U}$, $A_d X + B_d U$ can be \textit{equivalently} expressed by 
\begin{align}
A_d X + B_d U = (X_+ - B_w W ) \quad \text{for any} \; W \in \mathcal{W}.
\label{eq:para2}
\end{align}
Since \eqref{eq:para2} is an equivalent reformulation, we can substitute the term $X_+ - B_w W$ for all $W \in \mathcal{W}$ by the term $A_{\text{d}} X + B_{\text{d}} U$ for all $(A_{\text{d}}, B_{\text{d}}) \in \Sigma_{X,U}$ in the feasibility problem \eqref{eq:opt_noise2}. This yields the following condition equal to \eqref{eq:opt_noise2}:
\begin{align}
\begin{split}
\begin{pmatrix} X \\ U \end{pmatrix}^\top 
\begin{pmatrix} A_{\text{d}}^\top PA_{\text{d}} {-} P {-} \hat{Q} & A_{\text{d}}^\top PB_{\text{d}} {-} \hat{S} \\
(A_{\text{d}}^\top PB_{\text{d}} {-} \hat{S} )^\top & -\hat{R} {+} B^\top P B_{\text{d}} \end{pmatrix} 
 \begin{pmatrix} X \\ U \end{pmatrix}  \preceq 0
\end{split}
\label{eq:pf_lmi2}
\end{align}
for all $(A_{\text{d}}, B_{\text{d}}) \in \Sigma_{X,U}$ with  $\hat{Q} = C^\top QC$, $\hat{S} = C^\top S + C^\top QD$ and $\hat{R} = D^\top QD + (D^\top S + S^\top D) + R$. 

Therefore, if \eqref{eq:opt_noise} is feasible and hence \eqref{eq:pf_lmi2} holds for all $(A_{\text{d}}, B_{\text{d}}) \in \Sigma_{X,U}$, then all $(A_{\text{d}}, B_{\text{d}}) \in \Sigma_{X,U}$ are $(Q,S,R)$-dissipative arguing as in the proof of Theorem~\ref{thm:1} since $\begin{pmatrix} X \\ U \end{pmatrix}$ has full rank.
\end{proof}

\begin{remark}
The proof of Proposition~\ref{prop:noise} shows that, under the technical assumption that there exists a $W \in \mathrm{int} (\mathcal{W})$, the feasibility problem~\eqref{eq:opt_noise} is equivalent to 
\begin{align*}
\begin{pmatrix} A_{\text{d}}^\top PA_{\text{d}} {-} P {-} \hat{Q} & A_{\text{d}}^\top PB_{\text{d}} {-} \hat{S} \\
(A_{\text{d}}^\top PB_{\text{d}} {-} \hat{S} )^\top & -\hat{R} {+} B^\top P B_{\text{d}} \end{pmatrix} \preceq 0
\end{align*}
for all $(A_{\text{d}}, B_{\text{d}})$ that are consistent with the data.
This implies that the condition~\eqref{eq:opt_noise} is a necessary and sufficient condition for all $(A_{\text{d}}, B_{\text{d}}) \in \Sigma_{X,U}$ being $(Q,S,R)$-dissipative with a common quadratic storage function $V(x) = x^\top P x$.
\end{remark}
Via Proposition~\ref{prop:noise} we do not need to restrict our attention to $Q\preceq0$ in the special case $N=n+m$. Moreover, the resulting feasibility problem~\eqref{eq:opt_noise} is particularly simple in this case, where no additional equality constraint is required.

\section{Numerical Examples}
In the following, we apply the introduced approach to two numerical examples. We illustrate the influence of the noise bound on the robust dissipativity property in the first example, and we focus on the influence of the data length in the second example.  

\subsection{Example 1}
For the first example, we choose a randomly generated example with a system order of $n=4$, $m=2$ inputs and $p=2$ outputs. We choose an input signal uniformly sampled in the interval $[-1,1]$ and measure the state trajectory over the horizon $N=n+m$. We assume to know a bound on the otherwise unknown noise given by $\| \hat{w} \|_2 \leq \bar{w}$, which implies the bound $\hat{W} \in \mathcal{W}$ for $Q_w = -I$, $S_w = 0$, $R_w = \bar{w}^2 I$. Furthermore, we assume that all states are affected by the process noise and no additional knowledge is available leading to the choice $B_w = I$.
To generate the state measurements, we uniformly sample $\hat{w}$ from the ball $\|\hat{w}\|_2 \leq \bar{w}$ to simulate $N$ time steps of the system.
We now apply Proposition~\ref{prop:noise} to infer the shortage of passivity $s$ of our system, i.e. the minimal $s$ for which the system is $(Q,S,R)$-dissipative for $Q = s I$, $S = \frac{1}{2} I$ and $R = 0$, via a simple SDP without knowledge of $\hat{w}$ but only with the bound $\hat{W} \in \mathcal{W}$. The true shortage of passivity of the system is given by $s = 0.83$. The resulting guaranteed upper bounds $\hat{s}$ on the true shortage of passivity are illustrated in Fig.~\ref{fig:ex1} for different noise bounds $\bar{w}$. 

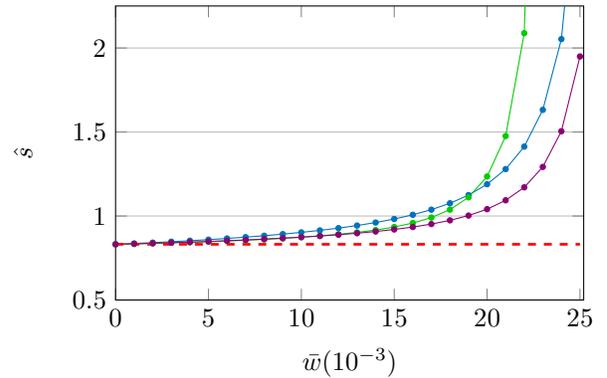
\begin{figure}[h]
\definecolor{mycolor1}{rgb}{0.00000,0.44700,0.74100}%
\begin{tikzpicture}
\begin{axis}[%
width=0.35\textwidth,
height=0.22\textwidth,
at={(0cm,0cm)},
scale only axis,
xmin=0.000,
xmax=25.2,
xlabel={$\bar{w} (10^{-3})$},
ymin=.5,
ymax=2.25,
ylabel={$\hat{s}$},
axis background/.style={fill=white},
ymajorgrids,
]
\addplot [color=mycolor1,mark size=1pt,mark=*,mark options={solid},forget plot]
  table[row sep=crcr]{%
0   0.8316\\
1	  0.8364\\
2		0.8415\\
3		0.8470\\
4		0.8530\\
5		0.8594\\
6		0.8664\\
7		0.8742\\
8		0.8826\\
9		0.8921\\
10	0.9026\\
11	0.9144\\
12	0.9278\\
13	0.9432\\
14	0.9611\\
15	0.9821\\
16	1.0072\\
17	1.0377\\
18	1.0757\\
19	1.1245\\
20	1.1891\\
21	1.2792\\
22	1.4129\\
23	1.6318\\
24	2.0530\\
25	3.1804\\
};
\addplot [color=green!80!black,mark size=1pt,mark=*,mark options={solid},forget plot]
  table[row sep=crcr]{%
0   0.8316\\
1	  0.8342\\
2		0.8371\\
3		0.8402\\
4		0.8437\\
5		0.8474\\
6		0.8515\\
7		0.8561\\
8		0.8613\\
9		0.8672\\
10	0.8739\\
11	0.8818\\
12	0.8911\\
13	0.9024\\
14	0.9164\\
15	0.9343\\
16	0.9580\\
17	0.9907\\
18	1.0380\\
19	1.1112\\
20	1.2350\\
21	1.4760\\
22	2.0882\\
23	5.6619\\
};
\addplot [color=red!55!blue,mark size=1pt,mark=*,mark options={solid},forget plot]
  table[row sep=crcr]{%
0   0.8316\\
1	  0.8345\\
2		0.8377\\
3		0.8411\\
4		0.8447\\
5		0.8486\\
6		0.8529\\
7		0.8574\\
8		0.8624\\
9		0.8680\\
10	0.8741\\
11	0.8809\\
12	0.8887\\
13	0.8975\\
14	0.9078\\
15	0.9199\\
16	0.9343\\
17	0.9520\\
18	0.9741\\
19	1.0026\\
20  1.0406\\
21	1.0934\\
22	1.1707\\
23	1.2922\\
24	1.5047\\
25	1.9496\\
};

\addplot [color=red,solid,forget plot,line width=1pt, dashed]
  table[row sep=crcr]{%
0	0.8316\\
25 0.8316\\
};
\end{axis}
\end{tikzpicture}%
\caption{Robust bound on the shortage of passivity $s$ for a randomly generated $2 \times 2$ system of order $n=4$ for three different randomly sampled noise instances (purple, blue, green) at increasing noise levels.}
\label{fig:ex1}
\end{figure}
Aligned with the theoretical results, the presented approach retrieves the exact shortage of passivity for noise-free measurements and a valid upper bound in the case of noisy state trajectories. The upper bound provided on the shortage of passivity increases with increasing $\bar{w}$, as we require the respective dissipativity property to hold for all systems consistent with the data. The size of this set increases with the noise bound.

\subsection{Example 2}
Next, we randomly generate two systems with two inputs and two outputs and system order $n=6$ and we choose an input signal that is uniformly sampled in the interval $[-1,1]$. We apply Theorem~\ref{thm:noise} to infer the $\mathcal{L}_2$-gain (or, equivalently, $H_\infty$-norm). For this, we first rewrite \eqref{eq:rob_perf_lmi} into an LMI by, first, performing a congruence transformation with $\mathrm{diag}(P^{-1},I)$ and then applying the Schur complement three times. Via a line search over $\tau$ of the resulting LMI, we can then calculate a robust bound on the $\mathcal{L}_2$-gain (with $\Pi_\gamma$ in~\eqref{eq:pigamma}). We calculate such a bound on the $\mathcal{L}_2$-gain for both systems over different data lengths, starting from the minimum length $N=n+m$ and increase $N$ up to $25$. Alternatively, one could also calculate a matrix $\mathcal{G}$ via \eqref{eq:G_inv} and then solve the SDP \eqref{eq:rob_perf_lmi} for that $\mathcal{G}$ (i.e. not optimizing over $\mathcal{G}$, possibly introducing some additional conservatism). 

We assume that we know a bound on $\|\hat{w}_k\|_2 \leq \bar{w}$ that holds for all $k=0,1,\dots,N-1$. This implies the bound $\hat{W} \in \mathcal{W}$ for $Q_w = -I$, $S_w = 0$, $R_w = \bar{w}^2 N I$ with $B_w = I$.
For every system at each time step $k=0,\dots,N-1$, we uniformly sample $\hat{w}_k$ from the ball $\|\hat{w}_k\|_2 \leq \bar{w}$ and we choose $\bar{w} = 0.001$.
In Fig.~\ref{fig:ex2}, we plot the relative difference to the true $\mathcal{L}_2$-gain of the approach, i.e. $\varepsilon = \frac{\hat{\gamma} - \gamma}{\gamma}$. Note that for all results $\hat{\gamma} \geq \gamma$ holds, which means that we indeed always correctly find and verify a $(Q,S,R)$-dissipativity of our system. 

\begin{figure}[h]
\definecolor{mycolor1}{rgb}{0.00000,0.44700,0.74100}%
\begin{tikzpicture}
\begin{axis}[%
width=0.37\textwidth,
height=0.22\textwidth,
at={(0cm,0cm)},
scale only axis,
xmin=7.8,
xmax=25.3,xtick={8,10,15,20,25},
xlabel={$N$},
ymin=0,
ymax=16.5,
ylabel={$\varepsilon (\%)$},
axis background/.style={fill=white},
ymajorgrids,
]
\addplot [color=red!55!blue,mark size=2pt,mark=+,mark options={solid},forget plot]
  table[row sep=crcr]{%
8		37.71\\
9		27.25\\
10	27.58\\
11	15.96\\
12	12.61\\
13	10.19\\
14	5.6\\
15	5.21\\
16	4.92\\
17	4.91\\
18	4.52\\
19	4.44\\
20	4.38\\
21	4.41\\
22	5.15\\
23	4.67\\
24	4.37\\
25	1.14\\
};
\addplot [color=green!80!black,mark size=1pt,mark=o,mark options={solid},forget plot]
  table[row sep=crcr]{%
8		2.9\\
9	  2.93\\
10	3.13\\
11	2.79\\
12	1.59\\
13	1.51\\
14	1.15\\
15	1.15\\
16	1.11\\
17	1.03\\
18	0.91\\
19	.81\\
20	.74\\
21	.74\\
22	.72\\
23	.72\\
24	.69\\
25	.67\\
};

\pgfplotsset{
    after end axis/.code={
				\node[above, text=red!55!blue] at (axis cs:8,16.75){\textcolor{red!55!blue}{\scriptsize{38}}}; 
				\node[above, text=red!55!blue] at (axis cs:9,16.75){\textcolor{red!55!blue}{\scriptsize{27}}}; 
				\node[above, text=red!55!blue] at (axis cs:10,16.75){\textcolor{red!55!blue}{\scriptsize{28}}}; 
    }
}
\end{axis}
\end{tikzpicture}%
\caption{Robust bound on the $\mathcal{L}_2$-gain for two randomly generated $2 \times 2$ system of order $n=6$ for different data lengths $8 \leq N \leq 25$.}
\label{fig:ex2}
\end{figure}
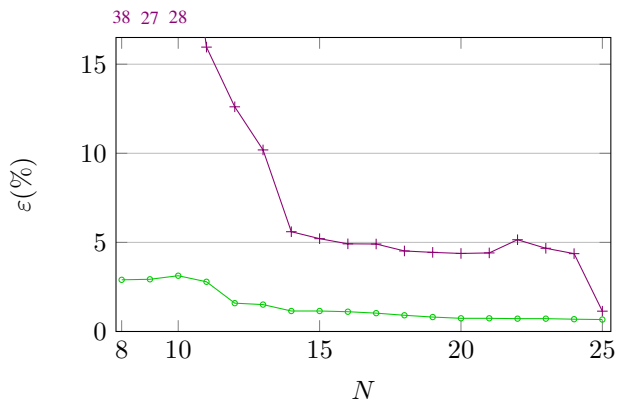

Fig.~\ref{fig:ex2} illustrates that more data points tend to reduce the conservatism as they reduce the size of the set $\Sigma_{X,U}$ and oftentimes also the size of the set $\Sigma_{X,U}^S$. Reducing the size of the set $\Sigma_{X,U}^S$ in turn also reduces $\varepsilon$.

\section{Conclusion and Outlook}
We have introduced a new approach to determine dissipativity properties that hold over the infinite horizon from finite input and state trajectories. We extended this approach by providing guarantees in the case that the input-state trajectories are corrupted by process noise. Numerical examples showed the potential of this method, solving a simple SDP for guaranteed dissipativity property from input and noisy state measurements.

The presented initial results indicate that the taken viewpoint provides advantages over other methods for dissipativity from data and has the potential to be extended with regard to different challenges. Ongoing work includes, for example, the question of determining robust guarantees on dissipativity properties from input-output data. 

\bibliographystyle{IEEEtran}  
\bibliography{bib_all}  

\end{document}